\documentclass{lmcs} %%% last changed 2014-08-20

\usepackage{hyperref}
\theoremstyle{plain} %\crefname{satz}{Satz}{S\"atze}
\DeclareMathOperator*{\E}{E}

\begin{document}

% If the title is longer than 55 characters, then specify a shorter running title as the optional argument to \title. The running title should be roughyl at most 55 characters:
\title{Feasible disjunction for random resolution}

% If you are not submitting to a special issue, you may remove the line below.
% If you are submitting to a special issue, you do not need to fill out this
% field (although you may).  Please be aware, however, that upon publication
% the layout editing team will insert the full special issue name, which may
% push down the text of your first section.
%\specialissue{Instructions for \LaTeX{} formatting of LMCS articles}

% affiliations are numbered automatically with a, b, c (see below)
% use the optional argument to indicate the affiliation(s) of each author
% omit the argument if there is only one author, or only one affiliation
\author{Theodoros Papamakarios\lmcsorcid{0009-0009-2814-5256}}
\thanks{Part of this work was done while the author was at the Institute of Mathematics of the Czech Academy of Sciences and supported by the GA{\v C}R grant 23-04825S}

%% etc.

%% required for running head on odd and even pages, use suitable
%% abbreviations in case of long titles and many authors:

%%%%%%%%%%%%%%%%%%%%%%%%%%%%%%%%%%%%%%%%%%%%%%%%%%%%%%%%%%%%%%%%%%%%%%%%%%%

%% the abstract has to PRECEDE the command \maketitle:
%% be sure not to issue the \maketitle command twice!

\begin{abstract}
We show that a (stronger) version of random resolution has the feasible disjunction property.  This is the first instance of a proof system not known to have feasible interpolation, which nevertheless has the feasible disjunction property.
\end{abstract}

\maketitle

%% start the paper here:
\section{Introduction}

We say that a proof system $P$ has the \emph{feasible interpolation} property if
\begin{quote}
\medskip
given a $P$ refutation of a formula of the form $F\wedge G$, where $F$ and $G$ do not share any variables,  we can decide in polynomial time which of $F$ and $G$ is unsatisfiable.
\medskip
\end{quote}
Feasible interpolation is a central object of study in proof complexity.  In particular,  if $P$ has the feasible interpolation property,  then we get conditional (or unconditional in the case of monotone feasible interpolation) lower bounds for $P$ (see e.g.\ \cite{Kraj97,Pudl97}).  If on the other hand $P$ does not have the feasible interpolation property,  this implies that $P$ is not automatable,  that is,  short $P$ refutations are hard to find (see e.g.\ \cite{Boneetal00,Becketal14}).

A closely related but less studied concept is \emph{feasible disjunction} \cite{Rudi97,KrajPudl98,Pudl03}.  We say that a proof system $P$ has the feasible disjunction property if
\begin{quote}
\medskip
whenever a formula of the form $F\wedge G$,  where $F$ and $G$ do not share any variables,  has a $P$ refutation of size $s$,  then either $F$ or $G$ must have a $P$ refutation of size polynomial in $s$.
\medskip
\end{quote}

The two properties are related in that typically,  systems which have feasible interpolation also have the feasible disjunction property.  In fact,  in all known cases,  the arguments themselves showing feasible interpolation can be straightforwardly modified to not only decide which formula is unsatisfiable but also produce a refutation of it.  Moreover,  in some cases,  feasible interpolation arguments go through feasible disjunction (see e.g.\ \cite{Hako20}).  In the other direction,  it is shown in \cite{Garl24} that feasible disjunction fails for Res(2),  which is precisely the point in the bounded-depth Frege hierarchy where feasible interpolation seems to fail too.

Having said that,  the two properties are not equivalent.  In one direction,  feasible interpolation is downward closed under p-simulation (that is,  if $P$ has feasible interpolation and $P$ p-simulates $Q$, then $Q$ has feasible interpolation),  whereas this is not true for feasible disjunction.  Indeed,  it was recently shown \cite{Hubaetal24,Lietal24} that by restricting a proof system with both feasible interpolation and feasible disjunction so that it corresponds to an intersection of TFNP classes,  one gets a subsystem for which feasible disjunction fails.

In this paper we give the first example in the opposite direction,  in the following weaker sense: we identify a proof system,  namely an extension of random resolution,  which has the feasible disjunction property,  yet is not known to have feasible interpolation.  Our proof is based on the minimax theorem,  and differs from previous arguments in an essential way.  Typically,  feasible disjunction is shown by directly extracting a refutation of either $F$ or $G$ from the given refutation of $F\wedge G$,  as is done for instance for resolution and cutting planes \cite{Pudl97}. For algebraic or semi-algebraic proof systems,  one can instead argue indirectly via duality (see \cite{Hako20,Hubaetal24}),  but the semantic object dual to the refutation lives in a space of dimension polynomial in it,  so a refutation can still be found in polynomial time by search \cite{Hako20}.  Our argument produces neither: the object we obtain is a distribution over assignments,  of exponential support; in particular,  we do not see how to make the proof polynomial-time constructive.

\section{Random resolution refutations}

Let $F$ be a CNF formula and let $V=V(F)$ be the set of variables occurring in $F$.  A \emph{resolution refutation} of $F$ is a sequence of clauses ending with the empty clause and such that each clause in the sequence is either a clause of $F$ or results from earlier clauses by either the \emph{resolution rule} ``from $C\vee x$ and $D\vee \neg x$ derive $C\vee D$'' or the \emph{weakening rule} ``from $C$ derive $C\vee D$''.  Now, let $\Pi$ be a resolution refutation of $F$ and let $\alpha:I\to\{0,1\}$ be a truth assignment where $I\subseteq V$.  Applying $\alpha$ to every clause of $\Pi$ and deleting the clauses that get satisfied, we get a resolution refutation of the restricted formula $F\vert_\alpha$,  which we denote by $\Pi\vert_\alpha$.  Finally,  for a truth assignment $\alpha:V\to\{0,1\}$,  we define \emph{the traversal of $\Pi$ by $\alpha$},  $\tau_{\scriptscriptstyle\Pi}(\alpha)$,  to be the clause of $F$ we reach if we traverse $\Pi$ starting at the empty clause and always choosing at applications of derivation rules the premise which is falsified by $\alpha$.  Traversals and restrictions by partial assignments interact as follows.
\begin{lem}\label{lemm:trav}
Let $I\subseteq V$ and let $\alpha:I\to\{0,1\}$ and $\beta:V\setminus I\to\{0,1\}$ be truth assignments.  We have that
\[
\tau_{\scriptscriptstyle\Pi\vert_\beta}(\alpha)=\tau_{\scriptscriptstyle\Pi}(\alpha,\beta)\vert_\beta.
\]
\end{lem}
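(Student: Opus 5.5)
The plan is to prove the identity by following the two traversals side by side and checking that they stay synchronised. Induction on the length of the walk in $\Pi$ from the empty clause is the natural vehicle.

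First I fix what $\Pi\vert_\beta$ is as a derivation. Each clause $C$ of $\Pi$ not satisfied by $\beta$ becomes $C\vert_\beta$. An application of a rule in $\Pi$ either survives as an application of a rule in $\Pi\vert_\beta$, or collapses, and a collapsed step is read as a weakening from the surviving premise. The case analysis is as follows.
\begin{itemize}
\item For a resolution step on $x\in I$, both premises restricted by $\beta$ survive or the conclusion is satisfied; if one premise is satisfied, the conclusion is too, so it is deleted as well. Any surviving step remains a resolution on $x$.
\item For a resolution step on $x\in V\setminus I$, the premise containing the literal made true by $\beta$ is deleted. The conclusion restricted by $\beta$ is then a weakening of the other premise restricted by $\beta$.
\item A weakening step stays a weakening step, or its conclusion is deleted.
\end{itemize}
Initial clauses of $F$ map to clauses of $F\vert_\beta$. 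The empty clause is not satisfied, so it survives. With this convention every surviving non-initial clause of $\Pi\vert_\beta$ has premises that are restrictions of premises in $\Pi$, and $\tau_{\Pi\vert_\beta}$ is well defined. This requires that $\alpha$, an assignment to $V(F\vert_\beta)\subseteq I$, falsifies exactly one premise at each resolution step; we extend it arbitrarily if needed.

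The core is the invariant that at each stage the current clause $D$ of the traversal of $\Pi\vert_\beta$ by $\alpha$ equals $C\vert_\beta$, where $C$ is the current clause of the traversal of $\Pi$ by $(\alpha,\beta)$. Moreover, $C$ is falsified by $(\alpha,\beta)$, so in particular it is not satisfied by $\beta$ and does survive. The invariant holds at the start with the empty clause. For the step, I compare the two walks case by case.
\begin{itemize}
\item At a resolution on $x\in I$, the premise of $\Pi$ falsified by $(\alpha,\beta)$ is the one falsified on $x$ by $\alpha$. Its restriction is the premise falsified by $\alpha$ in $\Pi\vert_\beta$, since both premises survive: neither is satisfied by $\beta$, because $C$ is not and the premises contain only literals of $C$ plus $x$ or $\neg x$.
\item At a resolution on $x\notin I$, $(\alpha,\beta)$ chooses the premise whose $x$-literal is falsified by $\beta$. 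This is exactly the surviving premise that $\Pi\vert_\beta$ passes to by weakening.
\item At a weakening step, both walks move to the unique premise.
\end{itemize}
Since the falsified premise chosen in $\Pi$ is falsified by $(\alpha,\beta)$, its restriction is falsified by $\alpha$, so the invariant propagates.

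When the walk in $\Pi$ reaches an initial clause $C=\tau_\Pi(\alpha,\beta)$, the walk in $\Pi\vert_\beta$ is at $C\vert_\beta$, which is the corresponding initial clause of $F\vert_\beta$. This gives the claim. The main obstacle is bookkeeping rather than mathematics. One must pin down the convention that a resolution step whose premise is deleted becomes a weakening in $\Pi\vert_\beta$ (or is bypassed) and check that the traversal of $\Pi\vert_\beta$ follows exactly that edge. If the paper instead intends the deleted clauses simply to be dropped, with a surviving clause derived from a single premise regarded as a weakening, the same case analysis applies verbatim.
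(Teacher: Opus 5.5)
Your proof is correct and takes the same approach as the paper. The paper's proof is the single observation that only $\beta$-satisfied clauses are deleted in $\Pi\vert_\beta$, so the two traversals follow the same path. Your invariant, together with the case analysis on resolution over $x\in I$, resolution over $x\notin I$ (which collapses to a weakening), and weakening, makes that observation rigorous.
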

\begin{proof}
Clauses satisfied by $\beta$ are deleted in $\Pi\vert_\beta$,  so the two traversals must follow the same path.
\end{proof}

Random resolution was introduced in \cite{Bussetal14} and further studied in \cite{PudlThap19}.  It is a proof system which models the notion of resolution proofs that make mistakes.  The papers \cite{Bussetal14,PudlThap19} consider one type of error of random resolution refutations which we call static error.  For the purposes of showing feasible disjunction,  we  introduce a different error measure based on the notion of traversals,  which we will call dynamic error.

\begin{defi}
A \emph{random resolution refutation} of $F$ is a probability distribution $(B_i,\Pi_i)_{i\sim\Delta}$,  where  
$B_i$ is a CNF with variables from $V=V(F)$ and $\Pi_i$ is a resolution refutation of $F\wedge B_i$.  For an $\alpha:V\to\{0,1\}$,  the \emph{static error} of $(B_i,\Pi_i)_{i\sim\Delta}$ on $\alpha$ is defined as the probability $\Pr_{i\sim\Delta}\left[B_i\vert_\alpha=0\right]$.  The \emph{dynamic error} of $(B_i,\Pi_i)_{i\sim\Delta}$ on $\alpha$ is defined as the probability $\Pr_{i\sim\Delta}\left[\tau_{\scriptscriptstyle\Pi_i}(\alpha)\in B_i\right]$.  The static (dynamic) error of $(B_i,\Pi_i)_{i\sim\Delta}$ is defined as the maximum over all $\alpha:V\to\{0,1\}$ of the static (resp.\ dynamic) error of $(B_i,\Pi_i)_{i\sim\Delta}$ on $\alpha$.
\end{defi}

An intuitive way to understand a random resolution refutation of $F$ is as a probabilistic procedure solving the search task $\text{Search}(F)$ associated with $F$: given an $\alpha:V\to\{0,1\}$,  find a clause of $F$ falsified by $\alpha$.  Namely,  given an $\alpha$,  we choose $i$ according to $\Delta$,  and carry out the traversal of $\Pi_i$ by $\alpha$ to find a falsified clause.  Dynamic error is the failure probability of this procedure.  Static error is instead a property of the formulas $B_i$ alone---it measures how often they are true---and does not depend on the refutations $\Pi_i$.  Both error measures give rise to sound proof systems,  which moreover are robust in the sense that they turn out to be invariant under various changes.  We mention here two such results,  namely \cite[Lemma 2.2]{PudlThap19} and \cite[Lemma 2.3]{PudlThap19}, relevant to this paper; we refer to \cite{PudlThap19} for more.
\begin{prop}\label{prop:prop}
Suppose $F$ has a random resolution refutation $(B_i,\Pi_i)_{i\sim\Delta}$ of static (dynamic) error $\varepsilon\in(0,1)$.  Then
\begin{enumerate}
\item $F$ is unsatisfiable;
\item $F$ has a random resolution refutation $(B_i,\Pi_i)_{i\sim\Delta'}$ of static (resp.\ dynamic) error $2\varepsilon$,  such that $\Delta'$ is the uniform distribution on a sample space of size $O(n/\varepsilon)$,  where $n$ is the number of variables of $F$;
\item $F$ has a random resolution refutation $(B_I,\Pi_I)_{I\sim\Delta^k}$ of static (resp.\ dynamic) error $\varepsilon^k$,  such that for every $I=(i_1,\dotsc,i_k)$ the size of $\Pi_I$ is $O(s_1\dotsb s_k)$,  where $s_i$ is the size of $\Pi_i$.
\end{enumerate}
\end{prop}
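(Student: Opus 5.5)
I will treat the three parts in turn. Each argument works for both error measures, using only what is needed: the dynamic error on $\alpha$ is $\Pr_i[\tau_{\Pi_i}(\alpha)\in B_i]$, and the static error is $\Pr_i[B_i\vert_\alpha=0]$.

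\emph{Part 1 (soundness).} Suppose $\alpha$ satisfies $F$. For each $i$, the traversal $\tau_{\Pi_i}(\alpha)$ ends at a clause of $F\wedge B_i$ falsified by $\alpha$. Every clause of $F$ is true under $\alpha$, so this clause must lie in $B_i$. Hence the dynamic error on $\alpha$ equals $1>\varepsilon$. In the static case, $\Pi_i$ is a resolution refutation of $F\wedge B_i$, and resolution is sound, so $F\wedge B_i$ is unsatisfiable. Since $\alpha$ satisfies $F$, it must falsify $B_i$, so $B_i\vert_\alpha=0$ for every $i$. The static error on $\alpha$ is therefore $1$. Both cases contradict error $\varepsilon<1$, so $F$ is unsatisfiable.

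\emph{Part 2 (small sample space).} I will sample $i_1,\dots,i_m$ i.i.d.\ from $\Delta$ with $m=cn/\varepsilon$, and let $\Delta'$ be uniform on this multiset. Fix $\alpha$ and let $X_j$ be the indicator of an error at $i_j$ on $\alpha$: either $\tau_{\Pi_{i_j}}(\alpha)\in B_{i_j}$, or $B_{i_j}\vert_\alpha=0$. Then $\E X_j\le\varepsilon$. A multiplicative Chernoff bound gives $\Pr[\sum_j X_j>2\varepsilon m]\le e^{-\varepsilon m/3}=e^{-cn/3}$. A union bound over all $2^n$ assignments $\alpha$ succeeds once $c$ is a large enough constant. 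This yields a sample on which the error is at most $2\varepsilon$ for every $\alpha$. The only care needed is the constant in Chernoff, and the case where the true mean is below $\varepsilon$, which is handled by monotonicity.

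\emph{Part 3 (amplification).} For $I=(i_1,\dots,i_k)$ I set $B_I=B_{i_1}\vee\dots\vee B_{i_k}$, written as a CNF. Its clauses are $C_1\vee\dots\vee C_k$ with $C_j\in B_{i_j}$. I build $\Pi_I$ by nesting the proofs.
\begin{itemize}
\item Start from $\Pi_{i_1}$, weakening each hypothesis $C_1\in B_{i_1}$ to the ``clause'' $C_1\vee B_{i_2}\vee\dots\vee B_{i_k}$.
\item Concretely, for each $C_1$ run a copy of $\Pi_{i_2}$ in which every clause is disjoined with $C_1$. Inside it, recurse into copies of $\Pi_{i_3}$, and so on.
\item At depth $k$, the leaves are either clauses of $F$ weakened by literals, or clauses $C_1\vee\dots\vee C_k$ of $B_I$.
\end{itemize}
The result is a resolution refutation of $F\wedge B_I$ of size $O(s_1\cdots s_k)$. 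Tautological weakenings can be dropped, since resolution on $x$ with $C_1$ containing $\neg x$ is handled by weakening.

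For the static error, $B_I\vert_\alpha=0$ iff every $B_{i_j}\vert_\alpha=0$. By independence this has probability at most $\varepsilon^k$.

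For the dynamic error, the traversal of $\Pi_I$ by $\alpha$ first follows $\tau_{\Pi_{i_1}}(\alpha)$. If that traversal ends in $F$, we are done. If it ends at some $C_1\in B_{i_1}$, it continues into the copy of $\Pi_{i_2}$, and traverses it exactly as in $\Pi_{i_2}$: the added literals of $C_1$ are false under $\alpha$, which is what Lemma~\ref{lemm:trav} guarantees. Thus it reaches $B_I$ only if every individual traversal lands in its $B_{i_j}$, which happens with probability at most $\varepsilon^k$.

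The main obstacle is making this nested construction precise, so that the traversal of the composite proof genuinely decomposes into the successive component traversals.
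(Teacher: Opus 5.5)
Your proposal is correct and follows the paper's route: you prove soundness exactly as the paper does (a satisfying assignment forces every traversal into $B_i$), and for parts 2 and 3 the paper simply invokes the Pudl\'{a}k--Thapen constructions, which are the Chernoff/union-bound sampling and the nested composition $B_I=B_{i_1}\vee\dots\vee B_{i_k}$ that you describe. Your explicit check that the traversal of the composed refutation splits into the component traversals (via Lemma~\ref{lemm:trav}) is what the paper means when it says these constructions ``carry over to the case of dynamic error with no changes''.
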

\begin{proof}
The constructions showing 2 and 3 in \cite{PudlThap19} carry over to the case of dynamic error with no changes.  For soundness in the case of dynamic error,  suppose for the sake of contradiction that $F$ is satisfiable and let $\alpha$ be a satisfying assignment.  Then $\Pr_{i\sim\Delta}[\tau_{\scriptscriptstyle\Pi_i}(\alpha)\in B_i]=1$,  contradicting the fact that $\varepsilon<1$.
\end{proof}

Notice that for every random resolution refutation,  its dynamic error is at most its static error,  and therefore random resolution with static error $\varepsilon$ is a weaker system than random resolution with dynamic error $\varepsilon$.  In particular,  if the latter system has the feasible interpolation property,  then the former has it as well.

The \emph{size} of a random resolution refutation $(B_i,\Pi_i)_{i\sim\Delta}$ is the maximum size of the refutations $\Pi_i$.  We can assume,  by Proposition~\ref{prop:prop},  without loss of generality that a random resolution refutation is a finite object and its size is well defined.

We end this section by briefly summarizing what is known about interpolation for random resolution with static error.  Given a random resolution refutation $(B_i,\Pi_i)_{i\sim\Delta}$ of a formula $F(x,u)\wedge G(x,v)$,  it is straightforward to construct a DAG-like randomized communication protocol for the disjointness of the sets $A=\{\alpha: \text{$F(\alpha,u)$ is satisfiable}\}$ and $B=\{\beta: \text{$G(\beta,v)$ is satisfiable}\}$ (that is,  a protocol where Alice gets an $\alpha\in A$,  Bob gets a $\beta\in B$ and the two players find a coordinate $j$ for which $\alpha_j\neq\beta_j$).  Based on this, Pudl{\'{a}}k and Thapen \cite{PudlThap19} get an exponential lower bound for tree-like random resolution refutations from a lower bound for tree-like randomized communication protocols.  Kraj{\'i}{\v c}ek \cite{Kraj17} shows that one can get a circuit that separates $A'\subseteq A$ and $B'\subseteq B$,  where the sizes of $A'$ and $B'$ depend on the static error $\varepsilon$.  Provided that $\varepsilon\leq 1/d^2$, where $d=\max_i|B_i|$,  this yields an exponential lower bound for the clique-coloring formulas.  Finally, Kraj{\'i}{\v c}ek \cite{Kraj18} considers a circuit model separating $A$ and $B$,  where rectangles in which the protocol errs are replaced by oracles separating these rectangles.  However,  none of the above models gives a polynomial time interpolating algorithm.

\section{Feasible disjunction for random resolution with dynamic error}

Let $F$ and $G$ be two CNF formulas not sharing any variables such that $F\wedge G$ is unsatisfiable, and let $(B_i,\Pi_i)_{i\sim\Delta}$ be a random resolution refutation of $F\wedge G$ of dynamic error $\varepsilon$.  For a truth assignment $\alpha$ to the variables of $F$ and a truth assignment $\beta$ to the variables of $G$,  we set
\begin{align*}
f(\alpha,\beta)&\coloneqq\Pr_{i\sim\Delta}\left[\tau_{\scriptscriptstyle\Pi_i}(\alpha,\beta)\in F\right],\\
g(\alpha,\beta)&\coloneqq\Pr_{i\sim\Delta}\left[\tau_{\scriptscriptstyle\Pi_i}(\alpha,\beta)\in G\right].
\end{align*}
Notice that for all $\alpha,\beta$,  $f(\alpha,\beta)+g(\alpha,\beta)\geq 1-\varepsilon$.  Now,  we define
\begin{align*}
V_F&\coloneqq\max_\mu\min_\alpha \E_{\beta\sim\mu}\left[f(\alpha,\beta)\right],\\
V_G&\coloneqq\max_\nu\min_\beta \E_{\alpha\sim\nu}[g(\alpha,\beta)],
\end{align*}
where $\mu$ ranges over all probability distributions on assignments $\beta$ and $\nu$ ranges over all probability distributions on assignments $\alpha$.

\begin{lem}\label{lemm:main}
We have that
$V_F+V_G\geq 1-\varepsilon.$
\end{lem}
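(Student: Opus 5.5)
Apply the minimax theorem to the finite zero-sum game whose payoff matrix is $f(\alpha,\beta)$: one player chooses $\alpha$, the other chooses $\beta$. By Proposition~\ref{prop:prop} the distribution $\Delta$ may be taken finite (the argument only needs $f,g$ to be real-valued on finite sets), and the sets of assignments $\alpha$ and $\beta$ are finite. So von Neumann's theorem gives
\[
V_F=\max_\mu\min_\alpha \E_{\beta\sim\mu}[f(\alpha,\beta)]=\min_\nu\max_\beta \E_{\alpha\sim\nu}[f(\alpha,\beta)],
\]
where $\mu,\nu$ range over distributions on $\beta$'s and $\alpha$'s respectively. (The inner min/max over pure strategies equals the one over mixed strategies, because the payoff is linear in the mixed strategy.)

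Let $\nu^*$ attain the minimum on the right. Then for every $\beta$ we have $\E_{\alpha\sim\nu^*}[f(\alpha,\beta)]\le V_F$. Using the pointwise inequality $f(\alpha,\beta)+g(\alpha,\beta)\ge 1-\varepsilon$ and averaging over $\alpha\sim\nu^*$, we get for every $\beta$
\[
\E_{\alpha\sim\nu^*}[g(\alpha,\beta)]\ \ge\ 1-\varepsilon-\E_{\alpha\sim\nu^*}[f(\alpha,\beta)]\ \ge\ 1-\varepsilon-V_F .
\]
Taking the minimum over $\beta$ gives $\min_\beta \E_{\alpha\sim\nu^*}[g(\alpha,\beta)]\ge 1-\varepsilon-V_F$. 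Since $V_G$ is a maximum over all $\nu$, we conclude $V_G\ge 1-\varepsilon-V_F$, which is the claim.

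The argument has no real obstacle. The only points to check are that the maxima and minima defining $V_F,V_G$ are attained (they are, since we optimize continuous functions over compact simplices) and that the minimax theorem applies to this finite matrix game; both are standard.
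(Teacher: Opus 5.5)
Your proof is correct and matches the paper's argument: minimax gives $V_F=\min_\nu\max_\beta \E_{\alpha\sim\nu}[f(\alpha,\beta)]$, and averaging $f+g\ge 1-\varepsilon$ over an optimal $\nu$ yields $V_G\ge\min_\beta \E_{\alpha\sim\nu}[g(\alpha,\beta)]\ge 1-\varepsilon-V_F$ (the paper phrases this last step via a minimizing $\beta_0$). Invoking Proposition~\ref{prop:prop} to make $\Delta$ finite is unnecessary, since $f$ and $g$ are real-valued functions on finite sets of assignments for any $\Delta$, and actually substituting that refutation would replace $\varepsilon$ by $2\varepsilon$; it is best dropped.
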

\begin{proof}
By von Neumann's minimax theorem,  we get \[V_F=\min_\nu\max_\beta \E_{\alpha\sim\nu}\left[f(\alpha,\beta)\right].\] Fix $\nu_0$ attaining the minimum.  Since $f(\alpha,\beta)+g(\alpha,\beta)\geq 1-\varepsilon$ for every $\alpha,\beta$,  we get that $\E_{\alpha\sim\nu_0}\left[f(\alpha,\beta)\right]+\E_{\alpha\sim\nu_0}\left[g(\alpha,\beta)\right]\geq 1-\varepsilon$ for every $\beta$,  in particular \[\E_{\alpha\sim\nu_0}\left[f(\alpha,\beta_0)\right]+\E_{\alpha\sim\nu_0}\left[g(\alpha,\beta_0)\right]\geq 1-\varepsilon\] for the $\beta_0$ which minimizes $\E_{\alpha\sim\nu_0}\left[g(\alpha,\beta)\right]$.  Therefore we have
\begin{align*}
V_F+V_G&=\min_\nu\max_\beta \E_{\alpha\sim\nu}\left[f(\alpha,\beta)\right]+\max_\nu\min_\beta\E_{\alpha\sim\nu}\left[g(\alpha,\beta)\right]\\
&\geq \max_\beta\E_{\alpha\sim\nu_0}\left[f(\alpha,\beta)\right]+\min_\beta\E_{\alpha\sim\nu_0}\left[g(\alpha,\beta)\right]\\
&\geq\E_{\alpha\sim\nu_0}\left[f(\alpha,\beta_0)\right]+\E_{\alpha\sim\nu_0}\left[g(\alpha,\beta_0)\right]\\
&\geq 1-\varepsilon.\qedhere
\end{align*}
\end{proof}

Now fix $\mu_1$ and $\nu_1$ attaining the maxima in the expressions $\max_\mu\min_\alpha \E_{\beta\sim\mu}\left[f(\alpha,\beta)\right]$ and $\max_\nu\min_\beta \E_{\alpha\sim\nu}\left[g(\alpha,\beta)\right]$ respectively.  Consider the distributions 
\begin{align*}
\Delta_F&\coloneqq\left((B_i\vert_\beta\wedge G\vert_\beta)\setminus F,\Pi_i\vert_\beta\right)_{i\sim\Delta,\beta\sim\mu_1},\\
\Delta_G&\coloneqq \left((B_i\vert_\alpha\wedge F\vert_\alpha)\setminus G,\Pi_i\vert_\alpha\right)_{i\sim\Delta,\alpha\sim\nu_1}.
\end{align*}
These are random resolution refutations of $F$ and $G$ respectively.  For every $\alpha$,  we get,  using Lemma~\ref{lemm:trav},  that the success rate (i.e.\ one minus its dynamic error) of $\Delta_F$ on $\alpha$ is
\begin{align*}
\Pr_{i\sim\Delta,\beta\sim\mu_1}\left[\tau_{\scriptscriptstyle\Pi_i\vert_\beta}(\alpha)\in F\right]&\geq\Pr_{i\sim\Delta,\beta\sim\mu_1}\left[\tau_{\scriptscriptstyle\Pi_i}(\alpha,\beta)\in F\right]\\
&=\E_{\beta\sim\mu_1}\left[f(\alpha,\beta)\right]
\geq V_F,
\end{align*}
hence its overall success rate is at least $V_F$.  Similarly,  we get that the success rate of $\Delta_G$ is at least $V_G$.  Since $V_F+V_G\geq 1-\varepsilon$ by Lemma~\ref{lemm:main},  either $V_F$ or $V_G$ must be at least $(1-\varepsilon)/2$.
We have thus proved:
\begin{thm}\label{thrm:main}
If $F\wedge G$ has a random resolution refutation of size $s$ and dynamic error $\varepsilon$,  then either $F$ or $G$ has a random resolution refutation of size $s$ and dynamic error at most $(1+\varepsilon)/2$.
\end{thm}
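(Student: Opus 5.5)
The plan is to assemble the argument already laid out before the statement, checking the few points the preceding text takes for granted.

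\emph{Setup.} Fix a random resolution refutation $(B_i,\Pi_i)_{i\sim\Delta}$ of $F\wedge G$ of size $s$ and dynamic error $\varepsilon$. Using Proposition~\ref{prop:prop}(2), I may assume $\Delta$ has finite support. This matters because $f$ and $g$ are then functions on the finite set of pairs $(\alpha,\beta)$, and $V_F,V_G$ are values of finite matrix games. Consequently von Neumann's minimax theorem applies, and the maxima defining $V_F,V_G$ are attained by some $\mu_1,\nu_1$. Lemma~\ref{lemm:main} then gives $V_F+V_G\ge 1-\varepsilon$.

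\emph{Construction and validity.} I will define $\Delta_F$ and $\Delta_G$ as in the text and check that they are genuine random resolution refutations of $F$ and $G$ of size at most $s$. For each $(i,\beta)$, restricting $\Pi_i$ by $\beta$ gives a resolution refutation $\Pi_i\vert_\beta$ of $(F\wedge B_i\wedge G)\vert_\beta$. Since $\beta$ touches only the variables of $G$, $F\vert_\beta=F$. Therefore $\Pi_i\vert_\beta$ refutes $F\wedge B'$ with $B'=(B_i\vert_\beta\wedge G\vert_\beta)\setminus F$, a CNF over $V(F)$. Restriction only deletes clauses and literals, so $|\Pi_i\vert_\beta|\le|\Pi_i|\le s$. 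The case of $\Delta_G$ is symmetric.

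\emph{Error bound.} I will bound the dynamic error by Lemma~\ref{lemm:trav}. For $\alpha$ on $V(F)$, suppose $\tau_{\Pi_i}(\alpha,\beta)=C\in F$. Then $C$ is not satisfied by $\beta$, since it contains no $G$-variables, and $C\vert_\beta=C$. Hence $\tau_{\Pi_i\vert_\beta}(\alpha)=C\in F$. One subtlety: if $C$ also occurs in $B'$ it could be counted as an error, but removing $F$ from $B'$ (the ``$\setminus F$'') avoids this. So the success probability of $\Delta_F$ on every $\alpha$ is at least $\E_{\beta\sim\mu_1}f(\alpha,\beta)\ge V_F$, by the choice of $\mu_1$. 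Symmetrically, $\Delta_G$ succeeds on every $\beta$ with probability at least $V_G$. One of $V_F,V_G$ is at least $(1-\varepsilon)/2$, so the corresponding refutation has dynamic error at most $1-(1-\varepsilon)/2=(1+\varepsilon)/2$.

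The main obstacle is the traversal bookkeeping in the error bound: making sure that a clause of $F$ reached in $\Pi_i$ remains a reached, unsatisfied clause of $F$ after restriction, and is not reclassified as a $B$-clause. Everything else follows from the lemmas and finiteness.
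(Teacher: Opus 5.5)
Your proof is the paper's proof: minimax via Lemma~\ref{lemm:main}, the restricted distributions $\Delta_F,\Delta_G$, and Lemma~\ref{lemm:trav} to show that $\Delta_F$ succeeds on each $\alpha$ with probability at least $\E_{\beta\sim\mu_1}[f(\alpha,\beta)]\geq V_F$. Your extra checks are correct: that $\Pi_i\vert_\beta$ refutes $F$ together with a CNF over $V(F)$, that the size does not grow, and that the ``$\setminus F$'' keeps $F$-leaves from being counted as errors.

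The one step to remove is your opening reduction. Proposition~\ref{prop:prop}(2) gives a refutation of dynamic error $2\varepsilon$, not $\varepsilon$, and it only applies for $\varepsilon\in(0,1)$. If you actually pass to it, your argument ends with error $(1+2\varepsilon)/2$ and misses the stated bound.

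The reduction is also unnecessary, and the reason you give for it is mistaken. The functions $f$ and $g$ are defined on $\{0,1\}^{V(F)}\times\{0,1\}^{V(G)}$, which is finite whatever $\Delta$ is. So $V_F$ and $V_G$ are values of finite matrix games for the given refutation. The minimax theorem, and the existence of $\mu_1,\nu_1$ attaining the maxima, then apply directly.
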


Notice that for $\varepsilon<1$,  the condition $V_F\geq (1-\varepsilon)/2$ is an interpolant, that is,  it correctly decides which of $F$ and $G$ is unsatisfiable.  Unfortunately,  it is not a polynomial-time computable interpolant.  Let us also note that the fact that the success rate of the proof given by Theorem~\ref{thrm:main} drops by half,  does not really make the result weaker,  as we can always reduce the error by Proposition~\ref{prop:prop}.  Specifically,  Proposition~\ref{prop:prop} gives:
\begin{cor}
If $F\wedge G$ has a random resolution refutation of size $s$ and dynamic error $\varepsilon\in(0,1)$,  then either $F$ or $G$ has a random resolution refutation of dynamic error at most $\varepsilon$ and size $s^{O(1+\log(1/\varepsilon))}$.
\end{cor}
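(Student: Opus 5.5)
The plan is to apply Theorem~\ref{thrm:main} first and then amplify with Proposition~\ref{prop:prop}. Suppose $F\wedge G$ has a random resolution refutation of size $s$ and dynamic error $\varepsilon\in(0,1)$. By Theorem~\ref{thrm:main}, one of the two formulas has a random resolution refutation of size $s$ and dynamic error at most $\delta\coloneqq(1+\varepsilon)/2$; call that formula $H$. Since $\varepsilon<1$, we have $\delta<1$, so Proposition~\ref{prop:prop} can be applied to $H$.

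By part 3 of Proposition~\ref{prop:prop}, taking $k$-fold products gives a refutation of $H$ with dynamic error $\delta^k$ and size $O(s^k)$. We need $k$ with $\delta^k\le\varepsilon$, that is, $k\ge \log(1/\varepsilon)/\log(1/\delta)$. The main obstacle is bounding $\log(1/\delta)$ from below, and this is where the dependence on $\varepsilon$ matters.

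When $\varepsilon$ is bounded away from $1$, say $\varepsilon\le 1/2$, we have $\delta\le 3/4$. Then $\log(1/\delta)$ is at least a positive constant, so $k=O(1+\log(1/\varepsilon))$ suffices, and the size is $O(s^k)=s^{O(1+\log(1/\varepsilon))}$ (assuming $s\ge2$, so the constant factor of the product construction is absorbed into the exponent). When $\varepsilon$ is close to $1$, we have $1-\delta=(1-\varepsilon)/2$, so $\log(1/\delta)\approx(1-\varepsilon)/2$. Since $\log(1/\varepsilon)\approx 1-\varepsilon$, this gives $k\approx 2$. Concretely, $\log(1/\delta)\ge 1-\delta=(1-\varepsilon)/2$ and $\log(1/\varepsilon)\le (1-\varepsilon)/\varepsilon\le 2(1-\varepsilon)$ for $\varepsilon\ge1/2$. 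Hence $k=4$ suffices in this range, which again fits the bound $s^{O(1+\log(1/\varepsilon))}$. Taking $k=\lceil C(1+\log(1/\varepsilon))\rceil$ for a suitable absolute constant $C$ therefore covers both regimes.

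Finally, I would note that the product construction in part 3 keeps finite support, so size remains well defined; if desired, part 2 can be used afterwards to keep the sample space small. Using part 2 costs only a factor of $2$ in the error, which is handled by taking one more factor $k$.
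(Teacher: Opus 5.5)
Your proof is correct and follows the paper's route: apply Theorem~\ref{thrm:main}, then amplify with part 3 of Proposition~\ref{prop:prop}, and your case split on $\varepsilon\le 1/2$ versus $\varepsilon\ge 1/2$ (giving $k=O(\log(1/\varepsilon))$ resp.\ $k=4$) supplies the calculation the paper leaves implicit. Your optional closing remark is wrong, though it is not needed: when $\varepsilon$ is close to $1$, an extra factor multiplies the error only by $\delta\approx 1$, so absorbing the factor $2$ from part 2 would take on the order of $1/(1-\varepsilon)$ extra factors, not one.
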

In particular,  for every $\varepsilon\in(0,1)$,  random resolution with dynamic error $\varepsilon$ has the feasible disjunction property.

\section{Concluding remarks}

We have identified a proof system, namely random resolution with dynamic error,  which has the feasible disjunction property,  yet the question of whether it also has the feasible interpolation property remains open.  This answers,  to the extent we are willing to believe that random resolution does not have feasible interpolation,  a question raised by Pudl\'{a}k~\cite{Pudl03},  who asks whether feasible disjunction holds for any of the systems that do not have feasible interpolation.  It should be noted that natural as it may be,  random resolution is not a traditional proof system; in particular it is not a proof system in the sense of Cook and Reckhow \cite{CookReck79},  as we cannot check in polynomial time if a given refutation is a valid random resolution refutation,  unless $\mathsf{P}=\mathsf{NP}$ (see \cite{PudlThap19}).  Hence Pudl\'{a}k's  question remains for Cook-Reckhow systems.

Whether random resolution has feasible interpolation is an important question for one more reason.  If random resolution had feasible interpolation,  then resolution over parities (see \cite{ItsySoko20}),  also known as Res($\oplus$) or R(LIN/$\mathbb{F}_2$),  would have feasible interpolation.  This follows from the fact that equality has small randomized communication protocols.  Res($\oplus$) is an important proof system,  right beyond the boundary of proof systems for which lower bounds can be proved.  Closer to the concerns of this paper,  it also lies at the boundary beyond which feasible interpolation is believed to fail.  In particular,  it can be seen that the interpolation problem for Res($\oplus$) can be reduced to the interpolation problem for Res(2),  which,  as we noted before,  is where,  in the bounded-depth Frege hierarchy,  feasible interpolation seems to fail.

It is worth noting that our main theorem,  Theorem~\ref{thrm:main},  is quite general.  Namely,  it holds for the random version of any proof system satisfying Lemma~\ref{lemm:trav},  and for example Frege systems (under an appropriate formalization) satisfy this lemma.  However,  dynamic error makes random versions of proof systems more general than resolution become trivial very quickly.  Specifically,  let $F=C_1\wedge\dotsb\wedge C_m$ be an unsatisfiable CNF.  Consider the pair $(\neg C_1,\Pi)$,  where in $\Pi$ we start from $\neg C_1$,  derive $\neg C_1\vee\dotsb\vee\neg C_m$,  and then use $m$ successive cuts on the clauses $C_1,\dotsc,C_m$ to derive the empty clause.  This is a depth-2 Frege refutation of $F$ of dynamic error 0 (but static error 1).

Considering that dynamic and static error behave so differently for random depth-2 Frege (this is a non-trivial system for static error,  see \cite{PudlThap19}),  we believe it is worth investigating how random resolution with dynamic error $\varepsilon$ compares to random resolution with static error $\varepsilon$.  A first observation is that the tree-like versions of the two systems coincide.  A second observation is that the lower bounds of \cite{PudlThap19} for DAG-like random resolution do not seem to immediately transfer to random resolution with dynamic error. Briefly,  the key ingredient in \cite{PudlThap19} for proving lower bounds is a ``fixing lemma'' showing that an appropriately chosen random partial assignment $\alpha$ either falsifies $B_i$ or admits no legal extension falsifying $B_i$,  with high probability.  Combined with the static error condition,  this yields an $\alpha$  no legal extension of which falsifies $B_i$,  and static error seems necessary for this step.  So an interesting problem is first to establish exponential lower bounds for random resolution with dynamic error,  say $1/2$,  possibly by showing a fixing lemma for traversals.  Secondly,  to see whether the two systems,  random resolution with static error $\varepsilon$ and random resolution with dynamic error $\varepsilon$,  are polynomially equivalent.

\section*{Acknowledgment}
I would like to thank Pavel Pudl\'{a}k and Neil Thapen for fruitful comments on an earlier draft.

\bibliographystyle{alphaurl}
\bibliography{refs}

@article{CookReck79,
    author = {Stephen Cook and Robert Reckhow},
    title = {The relative efficiency of propositional proof systems},
    journal = {Journal of Symbolic Logic},
    year = {1979},
    volume = {44},
    pages = {36--50}
}

@article{Boneetal00,
  author       = {Maria Luisa Bonet and
                  Toniann Pitassi and
                  Ran Raz},
  title        = {On Interpolation and Automatization for {F}rege Systems},
  journal      = {{SIAM} Journal on Computing},
  volume       = {29},
  pages        = {1939--1967},
  year         = {2000}
}

@article{KrajPudl98,
  author       = {Jan Kraj{\'i}{\v c}ek and
                  Pavel Pudl{\'{a}}k},
  title        = {Some Consequences of Cryptographical Conjectures for {$S^1_2$} and {EF}},
  journal      = {Information and Computation},
  volume       = {140},
  pages        = {82--94},
  year         = {1998}
}

@article{Pudl03,
  author       = {Pavel Pudl{\'{a}}k},
  title        = {On reducibility and symmetry of disjoint {NP} pairs},
  journal      = {Theoretical Computer Science},
  volume       = {295},
  pages        = {323--339},
  year         = {2003}
}

@article{Becketal14,
  author       = {Arnold Beckmann and
                  Pavel Pudl{\'{a}}k and
                  Neil Thapen},
  title        = {Parity Games and Propositional Proofs},
  journal      = {{ACM} Transactions on Computational Logic},
  volume       = {15},
  pages        = {17:1--17:30},
  year         = {2014}
}

@article{Kraj97,
  author       = {Jan Kraj{\'i}{\v c}ek},
  title={Interpolation theorems, lower bounds for proof systems, and independence results for bounded arithmetic},   
  journal={Journal of Symbolic Logic}, 
  volume={62},
  pages={457–486},
  year={1997}, 
}

@article{Pudl97,
  author       = {Pavel Pudl{\'{a}}k},
  title        = {Lower Bounds for Resolution and Cutting Plane Proofs and Monotone
                  Computations},
  journal      = {Journal of Symbolic Logic},
  volume       = {62},
  pages        = {981--998},
  year         = {1997}
}

@inproceedings{Hako20,
  author       = {Tuomas Hakoniemi},

  title        = {Feasible Interpolation for Polynomial Calculus and Sums-Of-Squares},
  booktitle    = {Proceedings of 47th International Colloquium on Automata, Languages, and Programming},
  volume       = {168},
  pages        = {63:1--63:14},
  year         = {2020}
}

@article{Kraj18,
  author       = {Jan Kraj{\'i}{\v c}ek},
  title        = {Randomized feasible interpolation and monotone circuits with a local oracle},
  journal      = {Journal of Mathematical Logic},
  volume       = {18},
  pages        = {1850012-1--27},
  year         = {2018}
}

@article{Bussetal14,
  author       = {Samuel Buss and
                  Leszek Ko{\l}odziejczyk and
                  Neil Thapen},
  title        = {Fragments of Approximate Counting},
  journal      = {Journal of Symbolic Logic},
  volume       = {79},
  pages        = {496--525},
  year         = {2014}
}

@article{PudlThap19,
  author       = {Pavel Pudl{\'{a}}k and
                  Neil Thapen},
  title        = {Random resolution refutations},
  journal      = {Computational Complexity},
  volume       = {28},
  pages        = {185--239},
  year         = {2019}
}

@article{ItsySoko20,
  author       = {Dmitry Itsykson and
                  Dmitry Sokolov},
  title        = {Resolution over linear equations modulo two},
  journal      = {Annals of Pure and Applied Logic},
  volume       = {171},
  year         = {2020}
}

@InProceedings{Rudi97,
author="Rudich, Steven",
title="Super-bits, demi-bits, and {NP}/qpoly-natural proofs",
booktitle="Proceedings of the Randomization and Approximation Techniques in Computer Science international
                  workshop",
year="1997",
pages="85--93"
}

@inproceedings{Garl24,
  author       = {Michal Garl{\'{\i}}k},
  title        = {Failure of Feasible Disjunction Property for k-{DNF} Resolution and
                  {NP}-Hardness of Automating It},
  booktitle    = {39th Computational Complexity Conference},
  volume       = {300},
  pages        = {33:1--33:23},
  year         = {2024}
}

@inproceedings{Hubaetal24,
  author       = {Pavel Hub{\'{a}}{\v c}ek and
                  Erfan Khaniki and
                  Neil Thapen},
  title        = {{TFNP} Intersections Through the Lens of Feasible Disjunction},
  booktitle    = {Proceedings of the 15th Innovations in Theoretical Computer Science Conference},
  volume       = {287},
  pages        = {63:1--63:24},
  year         = {2024}
}

@inproceedings{Lietal24,
  author       = {Yuhao Li and
                  William Pires and
                  Robert Robere},
  title        = {Intersection Classes in {TFNP} and Proof Complexity},
  booktitle    = {Proceedings of the 15th Innovations in Theoretical Computer Science Conference},
  volume       = {287},
  pages        = {74:1--74:22},
  year         = {2024}
}

@article{Kraj17,
    title      = {A feasible interpolation for random resolution},
    author     = {Jan Kraj{\'i}{\v c}ek},
    journal    = {Logical Methods in Computer Science},
    volume     = {13},
    year       = {2017}
}

\end{document}